
\documentclass[12pt]{article}

\usepackage{epsfig}

\usepackage{amssymb}
\usepackage{amsfonts}

\usepackage{color}
 
%
 \oddsidemargin = -0.5cm
 \evensidemargin = 0cm
 \textwidth = 17.5cm
   \setlength{\headheight}{-10pt}
 \setlength{\headsep}{-10pt}
                               \setlength{\textheight}{680pt}






%
%

\def\be{\begin{equation}}
\def\ee{\end{equation}}
\def\ba{\begin{array}{c}}
\def\ea{\end{array}}

\def\ben{$$}
\def\een{$$}

\newcommand{\bea}{\begin{eqnarray}}
\newcommand{\eea}{\end{eqnarray}}

\newcommand{\N}{\mathbb{N}}

\newcommand{\kt}{\rangle}
\newcommand{\br}{\langle}

\newtheorem{thm}{Theorem}

\newtheorem{lemma}[thm]{Lemma}

\newenvironment{proof}{\noindent
 {\bf Proof.}}{\hfill$\square$\vspace{3mm}\endtrivlist}

\begin{document}


\vspace{.35cm}

\begin{center}

{\Large

Passage through exceptional point: Case study

 }

\vspace{10mm}

\textbf{Miloslav Znojil}

The Czech Academy of Sciences, Nuclear Physics Institute,

 Hlavn\'{\i} 130,
250 68 \v{R}e\v{z}, Czech Republic

\vspace{0.2cm}

 and

\vspace{0.2cm}

Department of Physics, Faculty of Science, University of Hradec
Kr\'{a}lov\'{e},

Rokitansk\'{e}ho 62, 50003 Hradec Kr\'{a}lov\'{e},
 Czech Republic

%
%
%
\end{center}



\section*{Abstract}

The description of unitary evolution using non-Hermitian but
``hermitizable'' Hamiltonians $H$ is feasible via an {\it ad hoc}
metric $\Theta=\Theta(H)$ and a (non-unique) amendment $\br
\psi_1|\psi_2\kt \to \br \psi_1|\Theta |\psi_2\kt $ of the inner
product in Hilbert space. Via a proper fine-tuning of $\Theta(H)$
this opens the possibility of reaching the boundaries of stability
(i.e., exceptional points, EPs) in many quantum systems sampled here
by the fairly realistic Bose-Hubbard (BH) and discrete anharmonic
oscillator (AO) models. In such a setting it is conjectured that the
EP singularity can play the role of a quantum phase-transition
interface between different dynamical regimes. Three alternative
``AO $\leftrightarrow$ BH`` implementations of such an EP-mediated
dynamical transmutation scenario are proposed and shown, at an
arbitrary finite Hilbert-space dimension $N$, exact and
non-numerical.

\subsection*{Keywords}

quasi-Hermitian quantum Hamiltonians; non-Hermitian degeneracies;
transition matrices; phase transitions; closed-form toy models;

\newpage

\section{Introduction}

The recent growth of interest in non-Hermitian and, in particular,
parity-times-time-reversal-symmetric (${\cal PT}-$symmetric) quantum
Hamiltonians $H$ with real spectra \cite{BB,book,Carlbook} upgraded
also the status of Kato's exceptional points (EPs, \cite{Kato}) from
an abstract, purely mathematical concept to an interesting,
experimentally accessible singularity
\cite{Nimrod,stellenbosch,bosch,boscha,boschb,boschc,osch,oschb}. As
a a sample of this tendency one can recall paper~\cite{Uwe}, the
authors of which managed to throw new light upon the traditional
phenomenological problem of Bose-Einstein condensation as well as
upon its EP {\it alias\,} ``non-Hermitian degeneracy'' \cite{Berry}
mathematical background. For the sake of definiteness the authors of
paper~\cite{Uwe} choose the popular Bose-Hubbard (BH) Hamiltonian in
its specific non-Hermitian rearrangement. Their semi-analytic
description of the EP-related phenomena proved relevant for an
improvement of our understanding of many other EP-related models
\cite{Berryb,Berryba,Berrybb,Berrybc}. At present one can observe
that this direction of research did already branch into a large
number of various separate experimental as well as theoretical
subdirections \cite{Berryc}-\cite{Berrycl}.

In our present paper we intend to give the concept of non-Hermitian
degeneracy another, innovated interpretation of a unitary-evolution
gate, or interface, or transition path connecting two different
evolution phases of a single quantum system. In this perspective the
scope of the pioneering study~\cite{Uwe} was broader, with emphasis
put upon the phenomena of instability and, in general, on the
analysis of the behavior of the so called open, i.e., resonant and
non-unitary quantum systems (and, in particular, of the almost
degenerate BH states) under small perturbations. Interested readers
may find more details also in the recent extension of such a
rigorous open-system analysis in \cite{Viola}. In the narrower
context of the closed, strictly unitarily evolving quantum systems
of our present interest a complementary reading containing a deeper
discussion of physics as well as a few further mathematics-oriented
references may be also found in \cite{corridors,corridorsa}.

The description of our present results will start, in section
\ref{formulapro}, by a sketchy outline of an elementary
classification of quantum phase transitions. We explain there that
for any pair of phenomenological Hamiltonians (i.e., for
$H^{(-)}(t)$ defined at time $t<0$, and for $H^{(+)}(t)$ defined at
time $t>0$) the {\em simultaneous\,} existence of their EP-related
interfaces at $t=0$ can be used for a ``gluing'' or ``matching'' of
the respective evolution phases. We will argue that the idea of the
matching of the phases at a suitable interface proves perfectly
phenomenologically consistent, and that the EP-related fine-tuning
of the parameters appears feasible in multiple models. Their non-BH
versions will be sampled here by a discrete anharmonic-oscillator
quantum system (AO, \cite{maximal}).

From the point of view of physics the EP-related instantaneous $t=0$
loss of the observability ``erases'' all of the traces of the
information carried by the physics-determining parameters at $t <0$.
In section \ref{sushimi} we will add that our requirement of the
continuity of the Hamiltonian at $t=0$ still admits a freedom in the
construction. The idea will be shown to lead, in its application to
the exactly solvable BH and AO toy models, to as many as six
non-equivalent realizations of the process of transmutation (see
Table \ref{tataj} below).

The technical background of our $BH \leftrightarrow AO$ models of
evolution will be described in detail. The $N$ by $N$ transition
matrices will be constructed, in an explicit non-numerical form, in
the BH case (section \ref{jofob}) as well as in the AO case (section
\ref{jofo}). In section \ref{sec6} the closed-form transition matrix
will be also found for the less straightforward setup in which the
model-independent, universal Jordan-block-matching matrices will be
replaced, at $t=0$, by their less usual and model-dependent (though
only marginally more complicated) alternatives.

In the last two sections~\ref{summary} and \ref{sumsummary} we will
summarize our message and emphasize some of its less usual physical
aspects and conceptual consequences. {\it Pars pro toto\,} we will
mention that the above-mentioned erasure of the memory at $t=0$
might be reinterpreted as an ambiguity of the $t>0$ unitary
evolutions.

\section{EP-related models of quantum phase transitions\label{formulapro}}

The existence of the processes of the EP-related degeneracy is, of
course, not restricted to any specific model. The inclusion of any
non-BH- or non-AO-related Hamiltonian will still open the same
natural problem of what happens {\em after\,} any unitary quantum
system in question had {\em crossed\,} its EP singularity. This
problem is also the main challenge addressed in our present paper.

\subsection{Interfaces ($t=0$)\label{tretjak}}

In the vast literature devoted to the EP-related quantum phase
transitions (only partially reviewed, say, in \cite{denis}) the
Hamiltonian is often chosen as non-Hermitian but hermitizable {\it
alias\,} quasi-Hermitian (interested readers may find an exhaustive
explanation of this concept, e.g., in review paper \cite{Geyer}).
Typically, such a Hamiltonian is defined as split in two parts,
 \be
 H(t)=
 \left \{
 \ba
 H^{(-)}(t)\,,\ \ t<0\,,\\
 H^{(+)}(t)\,,\ \ t>0\,.
 \ea
 \right .
 \label{schem}
 \ee
Attention is paid here just to the two basic scenarios. In the
simpler one people study, exclusively, just the process which
precedes the transition. In the EP context, in particular, the
evolution {\em before\,} the collapse is described, at $t<0$, by a
non-Hermitian Hamiltonian $H^{(-)}(t)$ which is hermitizable. In
this sense, the evolution generated by the Hamiltonian is still
standard and unitary.

After the degeneracy, at least a part of the spectrum of the initial
Hamiltonian $H^{(-)}(t)$ may be found complex, being declared less
interesting \cite{BB}. Such a picture of reality found a number of
explicit realizations recently, partly also in the so called
quasi-Hermitian \cite{Geyer} {\it alias\,} ${\cal PT}-$symmetric
\cite{Carl} {\it alias\,} pseudo-Hermitian \cite{ali} quantum
mechanics.

In this framework, typically, the ${\cal PT}-$symmetry of a toy
model $H^{(-)}(t)$ gets spontaneously broken at $t=0$. A $t>0$
completion of the picture is then very often skipped since it would
require the construction and use of some prohibitively complicated
partner Hamiltonian $H^{(+)}(t)$. This can be called an
``unforseeable future'' scenario. In a classification scheme of
Ref.~\cite{denis2} it was given the name of quantum phase transition
of the first kind. The control of the evolution after $t=0$ is given
up there.

Irrespectively of the details of the post-EP dynamics, the
first-kind models are usually characterized by the discontinuity at
$t=t^{(EP)}=0$,
 \be
 \lim_{t \to 0^-}H^{(-)}_{[I]}(t)\ \neq \
 \lim_{t \to 0^+}H^{(+)}_{[I]}(t)\,.
 \label{nematchi}
 \ee
This motivated the authors of Ref.~\cite{denis3} to redirect
attention to the so called quantum phase transitions of the second
kind, with the latter name being reserved to the opposite extreme
with
 \be
 H^{(-)}_{[II]}(t) = H^{(+)}_{[II]}(t)\,
 \ee
at all of the relevant times $t$. In other words, in the
second-kind-transition models the Hamiltonian remains the same all
the time. The passage through the EP singularity is described by the
same Schr\"{o}dinger equation. What is changed at $t=t^{(EP)}=0$ is
not the Hamiltonian itself but merely the inner product in the
physical Hilbert space of states \cite{Geyer}. This means that at
least one of the observables {\em does\,} cease to be relevant {\em
after\,} the passage of the system through the EP interface.

In our present paper we intend to contemplate an intermediate, third
possible evolution scenario in which the two sub-Hamiltonians remain
different but still matched at the critical EP instant
$t=t^{(EP)}=0$,
 \be
 \lim_{t \to 0^-}H^{(-)}_{[III]}(t) =
 \lim_{t \to 0^+}H^{(+)}_{[III]}(t)\,.
 \label{matchi}
 \ee
Some of the basic features of such a dynamical setting (to be called
the quantum phase transition of the third kind) will be illustrated
here by a direct or indirect identification of the Hamiltonians with
one of the above-mentioned toy models $H^{(N)}_{(BH/AO)}$.

\subsection{The process of degeneracy (BH example, $t<0$)}

In Ref.~\cite{Uwe} the ${\cal PT}-$symmetric BH Hamiltonian was
presented in the form in which the particle interaction was
considered small so that the spectrum remains tractable by means of
the standard mathematical power-law expansion techniques. The same
model will also prove useful in our present study. For the sake of
simplicity we will only pay attention to the unperturbed,
one-parametric version of the model. According to Ref.~\cite{Uwe}
the $N-$th-sector Hamiltonian may be then given the following
one-parametric $N$ by $N$ complex-symmetric matrix form
  \be
 H^{(N)}_{(BH)}(z)
 =\left [\begin {array}{cccccc}
  -{\rm i}(N-1)\,z &g_1&&&&\\
 g_1& -{\rm i}(N-3)\,z&\ddots&&&\\
 &g_{2}&\ddots&g_3&&
 \\
 &&\ddots&{\rm i}(N-5)\,z&g_{2}&
 \\
 &&&g_{2}&{\rm i}(N-3)\,z&g_{1}\\
 &&&&g_{1}&{\rm i}(N-1)\,z
 \end {array}\right ]\,
 \label{chamm}
 \ee
with arbitrary $N$ and with couplings
 $
 g_n=\sqrt{ (N-n)\,n}
 $
and a real parameter $z\in (-1,1)$.

In this realization it is easy to prove the existence of the two
dynamically accessible exceptional-point singularities at  $z=\pm
1$. In the context of physics these boundaries of the
unitarity-guaranteeing interval are known to be related to the
experimentally highly relevant process of the Bose-Einstein
condensation \cite{Uwe}. In the language of mathematics these EP
{\it alias\,} non-Hermitian-degeneracy singularities themselves may
be also characterized, whenever needed, as the exceptional points of
order $N$ (EPN).

One of the most interesting features of model (\ref{chamm}) may be
seen in the accessibility of the EP boundary at which the
Hamiltonian itself already becomes unphysical. The EP limit of $H$
can be characterized by the loss of its diagonalizability. In
particular, in the BH case we have $\lim_{z \to
1^-}H^{(N)}_{(BH)}(z)=H^{(N)}_{(BH)}(1)$ or $\lim_{z \to
-1^+}H^{(N)}_{(BH)}(z)=H^{(N)}_{(BH)}(-1)$. The energy spectrum
becomes fully degenerate in both cases, $E_n \to \eta$ at all $n$.
Simultaneously, the total $N-$fold degeneracy also involves the
eigenvectors \cite{Kato}. The Hamiltonian ceases to be
diagonalizable so that also the conventional time-independent
Schr\"{o}dinger equation ceases to be solvable. In practice, it is
usually replaced by its alternative
 \be
 H^{(N)}_{(BH)}(1) Q^{(N)}_{(BH)}
 = Q^{(N)}_{(BH)}\,J^{(N)}(\eta)
 \,.
 \label{Crealt}
 \ee
In such a decomposition of the EP-related non-diagonalizable
Hamiltonian one most often employs the Jordan-block matrix factor
 \be
 J^{(N)}(\eta)=\left [\begin {array}{ccccc}
    \eta&1&0&\ldots&0
 \\{}0&\eta&1&\ddots&\vdots
 \\{}0&0&\eta&\ddots&0
 \\{}\vdots&\ddots&\ddots&\ddots&1
 \\{}0&\ldots&0&0&\eta
 \end {array}\right ]\,
 \label{hisset}
 \ee
representing one of the most popular ``canonical'' forms of the
degenerate $N$ by $N$ Hamiltonian.

In our present paper, a decisive progress in understanding of the
possible role of EP-related and apparently unphysical generalized
Schr\"{o}dinger equations will be achieved. In essence, the BH
``transition matrix'' $Q^{(N)}_{(BH)}$ will be perceived as a formal
EP analogue of the conventional pseudo-unitary matrices which would
diagonalize the $z-$dependent Hamiltonian $H^{(N)}_{(BH)}(z)$ in the
non-degenerate regime with $|z|<1$. We will also show that the $z=1$
Schr\"{o}dinger-resembling decomposition (\ref{Crealt}) and its
various non-BH alternatives might be perceived as describing
instantaneous EP interfaces between the phases of a temporarily
fragile but still globally robust, unitarily evolving quantum
system.

\subsection{The process of unfolding (AO example, $t>0$)}


Our choice of the AO model of Ref.~\cite{maximal} was motivated not
only by its immediate phenomenological appeal (see, e.g., the
details in \cite{catast,catasta,catastb}) but also by its most
elementary real-matrix nature simplifying its non-numerical
tractability at {\em any\,} finite matrix dimension $N$
\cite{tridiagonal,tridiagonala,tridiagonalb}. This merit contributed
to the present project. We will see that the construction of the
evolution pattern may remain algebraic and may preserve the
unitarity {\em both\,} before and after the critical time $t=0$. For
our present purposes such a technical feature of the model will be
important. The main reason is that the matching of the two
sub-Hamiltonians $H^{(\pm)}(t)$ in Eq.~(\ref{matchi}) would be an
ill-conditioned task in a generic, purely numerical setting. In a
way indicated in \cite{GBH2}, the brute-force computer-based
construction would require the use of a variable-length arithmetics.

Our choice of at least partially solvable models facilitated,
therefore, the task of matching the systems at their $t=0$ EP
interface. Moreover, not only the AO model of Ref.~\cite{maximal}
but also its one-parametric BH partner of Ref.~\cite{Uwe} proved
tractable at any dimension $N$. Thus, recalling the real and
asymmetric $\lambda-$dependent AO-matrix Hamilotonian
  \be
 H^{(N)}_{(AO)}(\lambda)
 =\left [\begin {array}{cccccc}
  -(N-1)&g_1(\lambda)&&&&\\
 -g_1(\lambda)& -(N-3)&g_{2}(\lambda)&&&\\
 &-g_{2}(\lambda)&\ddots&\ddots&&
 \\
 &&\ddots&N-5&g_{2}(\lambda)&
 \\
 &&&-g_{2}(\lambda)&N-3&g_{1}(\lambda)\\
 &&&&-g_{1}(\lambda)&N-1
 \end {array}\right ]\,
 \label{hamm}
 \ee
one may treat the positive and not too large real parameters
$\lambda>0$ as entering the matrix elements via the following
formula,
 \ben
 g_n(\lambda)=\pm \sqrt{ (N-n)\,n\,[1-\gamma_n(\lambda)]}\,,
 \ \ \ \ N=2K\ \ {\rm or}\ \ N=2K+1\,,
 \een
 \ben
 \gamma_n(\lambda)=\lambda+\lambda^2+\ldots +
 \lambda^{K-1}+ G_n\lambda^K\,,
 \ \ \ \ \ n = 1, 2, \ldots, K
 \,.
 \een
We will choose here all of the optional constants equal to zero,
$G_n=0$. The complete EPN degeneracy may be then reached in the
$\lambda \to 0$ limit~\cite{tridiagonal,tridiagonalb}.


\section{$BH \leftrightarrow AO$ transmutations
\label{sushimi}}

\subsection{Matching of Hamiltonians}

The key idea of the construction of the phase transitions of the
third kind can be now formulated as follows. Having in mind the
EP-related decomposition (\ref{Crealt}) rewritten, at $\eta=0$, in
the following equivalent form
 \be
  J^{(N)}(0)=
 \left [Q^{(N)}_{(BH)}\right ]^{-1}\,
 H^{(N)}_{(BH)}(1)\, Q^{(N)}_{(BH)}
  \,
 \label{reCrealt}
 \ee
we have to search for an analogous formula after the replacement $BH
\to AO$ of the dynamics-representing subscripts. The recipe is
straightforward. From the limit
 \be
 H^{(N)}_{(AO)}(0)\, Q^{(N)}_{(AO)}
 = Q^{(N)}_{(AO)}\,J^{(N)}(\eta)
 \,
 \label{Rrealt}
 \ee
of Schr\"{o}dinger equation we obtain $\eta=0$ and formula
 \be
  J^{(N)}(0)=
 \left [Q^{(N)}_{(AO)}\right ]^{-1}\,
 H^{(N)}_{(AO)}(0)\, Q^{(N)}_{(AO)}
  \,.
 \label{reRrealt}
 \ee
A comparison of Eqs.~(\ref{reCrealt}) and (\ref{reRrealt}) yields,
ultimately,
 \be
   \left [Q^{(N)}_{(BH)}\right ]^{-1}\,
 H^{(N)}_{(BH)}(1)\, Q^{(N)}_{(BH)}
 =
 \left [Q^{(N)}_{(AO)}\right ]^{-1}\,
 H^{(N)}_{(AO)}(0)\, Q^{(N)}_{(AO)}
  \,.
 \label{reCRrealt}
 \ee
This result offers a key to our forthcoming constructions of the $BH
\leftrightarrow AO$ models of the quantum phase transition of the
third kind.
Relation (\ref{reCRrealt}) itself may be, indeed, read as a sample of the
Jordan-block-mediated $BH-AO$ matching (\ref{matchi}) at $t=0$.

The $BH \leftrightarrow AO$ matching formula (\ref{reCRrealt})
requires the knowledge of transition matrices $Q^{(N)}_{(BH/AO)}$.
The construction of these matrices proceeds via the solution of
equations (\ref{Crealt}) and (\ref{Rrealt}). Let us now temporarily
skip this procedure and let us postpone the construction to
sections~\ref{jofob} and~\ref{jofo}. In between, we intend to point
out that besides Eq.~(\ref{reCRrealt}) there also exist a few other,
not necessarily Jordan-block-mediated $BH-AO$ matchings of
Hamiltonians (\ref{schem}) at the EP instant $t=0$.

\subsection{Three alternative patterns of passage\label{mare}}

The numerical assignment of the canonical Jordan block
(\ref{hisset}) to a given non-diagonalizable Hamiltonian is a
nontrivial task in general \cite{GBH2}. This in fact motivated our
search for its non-numerical samples. We succeeded. We will show
below that Eqs.~(\ref{Crealt}) and (\ref{Rrealt}) are solvable in
closed form. The importance of such a result (i.e., of the
availability of the closed-form transition matrices) lies in the
possibility of a constructive illustration of some of the key
properties of the phase transitions of the third kind. This will be
provided here in a sufficiently ``realistic'' setting, i.e., in the
matrix model(s) in which the dimension $N$ may be arbitrary.

In BH model we choose the extreme $z=z^{(EP)}=1$. Recalling relation
(\ref{reCrealt}) we wrote down the non-canonical, non-Jordan but
${\cal PT}-$symmetric version
 \be
 H^{(N)}_{(BH)}(1)
 = Q^{(N)}_{(BH)}\,J^{(N)}(0)\,
 \left [Q^{(N)}_{(BH)}
 \right ]^{-1}
 \,
 \label{uCrealt}
 \ee
of the EP Hamiltonian. In the same manner we obtained
 \be
 H^{(N)}_{(AO)}(0)
 = Q^{(N)}_{(AO)}\,J^{(N)}(0)\,
 \left [Q^{(N)}_{(AO)}
 \right ]^{-1}
 \,.
 \label{uRrealt}
 \ee
Without the knowledge of transition matrices (which has to be
supplied later) this is a purely formal result. Still, its
consequences are nontrivial and remarkable. First of all, we may
apply the same transformation off the EP extreme. In this way the
transition matrices start playing the role of an optimal unperturbed
basis (see \cite{corridors} for a few further,
perturbation-theory-related comments). In this basis our
diagonalizable BH-related modified Hamiltonian may be given the form
 \be
 {L}^{(N)}_{(BH)}(z)
 =
 \left [Q^{(N)}_{(BH)}
 \right ]^{-1}
 \,H^{(N)}_{(BH)}(z)\,Q^{(N)}_{(BH)}
 \,,\ \ \ z < 1\,.
 \label{AuCrealt}
 \ee
Next, we introduce the fully analogous AO-related Hamiltonian
 \be
 {L}^{(N)}_{(AO)}(\lambda)
 = \left [Q^{(N)}_{(AO)}
 \right ]^{-1}\,H^{(N)}_{(AO)}(\lambda)\,Q^{(N)}_{(AO)}
 \,,\ \ \ \lambda > 0\,.
 \label{AuRrealt}
 \ee
Thirdly, we will be able to evaluate, in closed form, also the
``real to complex'' product matrices
 \be
 S^{(N)}_{(RC)}= Q^{(N)}_{(AO)}\,\left [Q^{(N)}_{(BH)}
 \right ]^{-1}\,
 \ee
(cf. section \ref{sec6} below). This will enable us to define the
third pair of Hamiltonian matrices
 \be
 K^{(N)}_{({BH})}(z)
 =
 S^{(N)}_{(RC)}\,H^{(N)}_{(BH)}(z)\,
 \left [S^{(N)}_{(RC)}
 \right ]^{-1},
 \label{SAuCrealt}
 \ee
 \be
 K^{(N)}_{({AO})}(\lambda)
 =
 \left [S^{(N)}_{(RC)}
 \right ]^{-1}\, H^{(N)}_{(AO)}(\lambda)\,S^{(N)}_{(RC)}
 \,.
 \label{SAuRrealt}
 \ee
With such a set of physics-controlling toy-model matrices we are now
prepared to satisfy the matching condition (\ref{matchi}).

\begin{thm}
At any matrix dimension $N<\infty$ the matching (\ref{matchi}) of
${\cal PT}-$symmetric BH model (\ref{chamm}) with ${\cal
PT}-$symmetric AO model (\ref{hamm}) admits six realizations of a
unitary quantum phase transition of the third kind. The eligible
Hamiltonians are listed in Table~\ref{dowe}.
\end{thm}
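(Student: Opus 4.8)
The plan is to organize the six transitions as three matched pairs of Hamiltonian families, each glued at $t=0$ to one of three distinct values of the common EP-limit matrix and read in the two admissible time-orderings. Everything rests on the transition matrices $Q^{(N)}_{(BH)}$ and $Q^{(N)}_{(AO)}$ solving (\ref{Crealt}) and (\ref{Rrealt}) at $\eta=0$; these are produced in closed form in sections~\ref{jofob} and~\ref{jofo}, so here I would simply invoke their existence and invertibility. Writing $J:=J^{(N)}(0)$, they furnish the two similarity pictures $H^{(N)}_{(BH)}(1)=Q^{(N)}_{(BH)}\,J\,[Q^{(N)}_{(BH)}]^{-1}$ and $H^{(N)}_{(AO)}(0)=Q^{(N)}_{(AO)}\,J\,[Q^{(N)}_{(AO)}]^{-1}$ of one and the same Jordan block, i.e. (\ref{uCrealt}) and (\ref{uRrealt}).

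First I would record the three elementary EP-limit identities. Combining the definitions (\ref{AuCrealt})--(\ref{AuRrealt}) with (\ref{reCrealt}) and (\ref{reRrealt}) gives at once $L^{(N)}_{(BH)}(1)=L^{(N)}_{(AO)}(0)=J$. Substituting $S^{(N)}_{(RC)}=Q^{(N)}_{(AO)}\,[Q^{(N)}_{(BH)}]^{-1}$ into (\ref{SAuCrealt})--(\ref{SAuRrealt}) and cancelling the inner factors against these Jordan relations yields the two cross-identities $K^{(N)}_{(BH)}(1)=Q^{(N)}_{(AO)}\,J\,[Q^{(N)}_{(AO)}]^{-1}=H^{(N)}_{(AO)}(0)$ and $K^{(N)}_{(AO)}(0)=Q^{(N)}_{(BH)}\,J\,[Q^{(N)}_{(BH)}]^{-1}=H^{(N)}_{(BH)}(1)$. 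Each of these is a one-line manipulation once the $Q$'s are in hand.

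Next I would pass from EP values to genuine one-sided limits. Each entry of (\ref{chamm}) and (\ref{hamm}) depends continuously on $z$ or $\lambda$ (indeed analytically in a neighbourhood of the EP), and since the conjugators $Q^{(N)}_{(BH/AO)}$ and $S^{(N)}_{(RC)}$ are fixed, the families $L$ and $K$ inherit this continuity; under any admissible monotone reparametrization sending $z\to 1$ and $\lambda\to 0$ as $t\to 0$, the one-sided limits in (\ref{matchi}) therefore equal the EP values just computed. This exhibits exactly three matchings: the pair $(H^{(N)}_{(BH)},K^{(N)}_{(AO)})$ glued at $H^{(N)}_{(BH)}(1)$, the pair $(H^{(N)}_{(AO)},K^{(N)}_{(BH)})$ glued at $H^{(N)}_{(AO)}(0)$, and $(L^{(N)}_{(BH)},L^{(N)}_{(AO)})$ glued at $J$. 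Assigning the two members of each unordered pair to $t<0$ and $t>0$ in either order produces $3\times 2=6$ candidate transitions; since inside every pair the two branches are distinct matrix functions of the time-like parameter, coinciding only in the limit, each realization is genuinely of the third kind (\ref{matchi}) rather than of the second.

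The step I expect to be the real obstacle is certifying the unitarity of each branch off the EP. For $z\in(-1,1)$ the BH Hamiltonian (\ref{chamm}) and for $\lambda>0$ the AO Hamiltonian (\ref{hamm}) possess real spectra, hence are quasi-Hermitian and carry a positive-definite metric $\Theta$ rendering the evolution unitary; because $L$ and $K$ arise from $H$ through a fixed similarity ($Q^{(N)}_{(BH/AO)}$ or $S^{(N)}_{(RC)}$), the transported metric keeps every conjugated branch quasi-Hermitian with the same real spectrum, so all six branches evolve unitarily on their open parameter intervals. The genuinely delicate points are to verify that these transported metrics stay positive-definite, with bounded inverse, as the EP is approached, and to confirm that the six realizations are pairwise non-equivalent; tabulating the resulting six matched pairs then yields Table~\ref{dowe} and closes the argument.
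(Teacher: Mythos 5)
Your proposal is correct and follows essentially the same route as the paper's own (much terser) proof: three distinct realizations of the degenerate interface matrix at $t=0$, namely $H^{(N)}_{(BH)}(1)$, $J^{(N)}(0)$ and $H^{(N)}_{(AO)}(0)$, each traversable in two time-orderings, giving $3\times 2=6$. You are in fact more explicit than the paper, which takes the gluing identities $K^{(N)}_{(BH)}(1)=H^{(N)}_{(AO)}(0)$, $K^{(N)}_{(AO)}(0)=H^{(N)}_{(BH)}(1)$ and $L^{(N)}_{(BH)}(1)=L^{(N)}_{(AO)}(0)=J^{(N)}(0)$ as immediate from the definitions and, like you, defers the quasi-Hermiticity of the off-EP branches to the cited literature rather than re-proving positive-definiteness of the transported metrics.
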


\begin{table}[h]
\caption{Six unitary-evolution realizations of the quantum
Hamiltonian of Eq.~(\ref{schem})}
 \label{dowe} \vspace{.1cm}
\centering \label{tataj}
\begin{tabular}{||l||c|c|c||}
\hline
  \hline \vspace{-0.4cm}
  &&&\\
   \hspace{.31cm} {\rm } & $H^{(-)}(t)$ & {\rm EP}
     &$H^{(+)}(t)$\\
   \hspace{.31cm} {\rm phase transition} &  &
        & \\
 \hline
      \hline
   {\rm  \hspace{.6cm} {\rm 1.  {\rm  BH $\to $ AO  }}}
   & $t=z-1 <0$\,
  &$t=0$ &$t=\lambda >0$\\
 \hline
 \hline \vspace{-0.4cm}
  &&&\\
  {\rm BH to {AO}-like} &$H_{(BH)}^{(N)}(z)$
  & $H_{(BH)}^{(N)}(0)$&$K_{({AO})}^{(N)}(\lambda)$\\ \vspace{-0.5cm}
  &&&\\
 \hline \vspace{-0.4cm}
  &&&\\
    {\rm Jordan-block match} &${L}_{(BH)}^{(N)}(z)$
  &$J^{(N)}(0)$ &${L}_{(AO)}^{(N)}(\lambda)$ \\ \vspace{-0.5cm}
  &&&\\
 \hline \vspace{-0.4cm}
  &&&\\
  {\rm {BH}-like to AO } & $K_{({BH})}^{(N)}(z)$
  & $H_{(AO)}^{(N)}(0)$ & $H_{(AO)}^{(N)}(\lambda)$\\
      \hline
 \hline
 {\rm  \hspace{.6cm} {\rm 2.  {\rm  AO $\to $ BH  }}}
   & $t=-\lambda <0$\,
  &$t=0$ &$t=1-z >0$\\
 \hline
 \hline \vspace{-0.4cm}
  &&&\\
  {\rm AO to {BH}-like} &$H_{(AO)}^{(N)}({\lambda})$
  & $H_{(AO)}^{(N)}(0)$&$K_{({BH})}^{(N)}(z)$\\ \vspace{-0.5cm}
  &&&\\
 \hline \vspace{-0.4cm}
  &&&\\
  {\rm   Jordan-block match} &${L}_{(AO)}^{(N)}({\lambda})$
  &$J^{(N)}(0)$ &${L}_{(BH)}^{(N)}(z)$\\ \vspace{-0.5cm}
  &&&\\
 \hline \vspace{-0.4cm}
  &&&\\
  {\rm {AO}-like to BH } & $K_{({AO})}^{(N)}({\lambda})$
  & $H_{(BH)}^{(N)}(0)$ & $H_{(BH)}^{(N)}(z)$
  \\
 \hline \hline
\end{tabular}
\end{table}

\begin{proof}
Three independent realizations of the degenerate Hamiltonian are, at
the instant of matching $t=0$, at our disposal. Off the EP regime,
the choice of the dependence of the parameters on time is fully at
our disposal. In principle, it can be of two types. In the first one
we may mimic the BH $\to $ AO evolution, and we may work with the
growing $z$ (say, with $z=1+t$ at negative $t<0$) and with the
growing $\lambda$ (say, with $\lambda=t$ at $t>0$). The inverse AO
$\to $ BH processes are obtained when we modify the scheme and when
we postulate the decrease of $\lambda=-t$ (at $t<0$) and of $z=1-t$
(at $t>0$).
\end{proof}

\section{Non-numerical construction (BH case)\label{jofob}}


The explicit Jordan-form reduction (\ref{reCrealt}) of our complex
BH Hamiltonian $H^{(N)}_{(BH)}(1)$ required the knowledge of the
$N^2$ complex matrix elements of transition matrix $Q^{(N)}_{(BH)}$.
The transition matrix itself is determined by the set of the $N^2$
linear algebraic equations (\ref{Crealt}). The solution
$Q^{(N)}_{(BH)}$ of such a generalized Schr\"{o}dinger equation is,
up to an arbitrary overall multiplication factor, unique. We
revealed that such a solution may be obtained in closed form. Let us
now describe the details.

\subsection{Transition matrices $Q^{(N)}_{(BH)}$
at small $N$\label{sec4Rbv}}

At the smallest matrix dimensions $N$ it makes sense to use a
suitable symbolic-manipulation solver (viz., MAPLE \cite{Maple} in
our case). Using a completely routine procedure we were able to list
the most elementary BH Hamiltonians and to evaluate, quickly, the
related transition matrices,
  $$
 H^{(2)}_{(BH)}=\left[ \begin {array}{cc} -i&1\\\noalign{\medskip}1&i\end {array} \right]
 \,,\ \ \ \ \
 Q^{(2)}_{(BH)}=\left[ \begin {array}{cc} -i&1\\\noalign{\medskip}1&0\end {array}
 \right]\,,
 $$
 $$
 H^{(3)}_{(BH)}=\left[ \begin {array}{ccc}
 -2\,i&\sqrt {2}&0\\\noalign{\medskip}\sqrt {2}&0&\sqrt {2}
 \\\noalign{\medskip}0&\sqrt {2}&2\,i\end {array}
 \right]
 \,,\ \ \ \ \
 Q^{(3)}_{(BH)}= \left[ \begin {array}{ccc} -2&-2\,i&1\\\noalign{\medskip}-2\,i\sqrt {
2}&\sqrt {2}&0\\\noalign{\medskip}2&0&0\end {array} \right]\,,\
\ldots\ .
 $$
At the higher matrix dimensions $N$ one may encounter serious
technical difficulties even for the BH-related complex-symmetric
matrices. The reasons were explained in Ref.~\cite{GBH2}. In
essence, these difficulties may only partly be attributed to the
above-mentioned ill-conditioned nature of Eq.~(\ref{Crealt}).
Fortunately, for the BH models of Ref.~\cite{Uwe} such an obstacle
proved softened by the specific symmetries of the model. The
existence of these symmetries explains why the construction remained
straightforward even for the $N=6$ BH Hamiltonian
 $$
 H^{(6)}_{(BH)}=\left[ \begin {array}{cccccc} -5\,i&\sqrt {5}&0&0&0&0
 \\\noalign{\medskip}\sqrt {5}&-3\,i&2\,\sqrt {2}&0&0&0
 \\\noalign{\medskip}0&2\,\sqrt {2}&-i&3&0&0\\\noalign{\medskip}0&0&3&i
 &2\,\sqrt {2}&0\\\noalign{\medskip}0&0&0&2\,\sqrt {2}&3\,i&\sqrt {5}
 \\\noalign{\medskip}0&0&0&0&\sqrt {5}&5\,i\end {array} \right]\,.
 $$
This operator was still assigned the transition matrix in closed
form,
 \be
 Q^{(6)}_{(BH)}=\left[ \begin {array}{cccccc} -120\,i&120&60\,i&-20&-5\,i&1
 \\\noalign{\medskip}120\,\sqrt {5}&96\,i\sqrt {5}&-36\,\sqrt {5}&-8\,i
 \sqrt {5}&\sqrt {5}&0\\\noalign{\medskip}120\,i\sqrt {5}\sqrt
 {2}&-72 \,\sqrt {5}\sqrt {2}&-18\,i\sqrt {5}\sqrt {2}&2\,\sqrt
 {5}\sqrt {2}&0&0
 \\\noalign{\medskip}-120\,\sqrt {5}\sqrt {2}&-48\,i\sqrt {5}\sqrt {2}&
 6\,\sqrt {5}\sqrt {2}&0&0&0\\\noalign{\medskip}-120\,i\sqrt {5}&24\,
 \sqrt {5}&0&0&0&0\\\noalign{\medskip}120&0&0&0&0&0\end {array}
 \right]\,.
 \label{sestac}
 \ee
The matrix elements of the latter matrix proved to have the form
which seems to admit extrapolation.

\subsection{Transition matrices $Q^{(N)}_{(BH)}$
at all $N$ \label{sec4Cb}}


A few trial and error experiments with an $N>6$  extrapolation of
formula (\ref{sestac}) led finally to the promising ansatz, the
correctness of which was then easily verified, by insertion, at a
few higher matrix dimensions $N$. Ultimately, the use of
mathematical induction confirmed that the closed-form ansatz which
defined the transition matrices in the symbolic-manipulation test is
correct.

\begin{lemma}
At all $N$, transition matrices $Q^{(N)}_{(BH)}$ may be factorized,
 $$
 Q^{(N)}_{(BH)}=
 D^{(N)}
 \times
 P^{(N)}
 \times
 G^{(N)}\,.
 $$
The formula contains the two complex diagonal-matrix factors with
elements
 $$
 D^{(N)}_{n,n}={\rm i}^n\,\times\,\sqrt{\left ( \ba N-1\\N-n
 \ea
 \right )}\,,\ \ \ n = 0, 1, \ldots, N-1
 $$
and
 $$
 G^{(N)}_{n,n}=(-i)^{N-n-1}\,\times\,(N-1-n)!\,,\ \ \ n = 0, 1, \ldots, N-1\,
 $$
together with the Pascal-triangle matrix
 \be
 P^{(N)}_{m,q}=\left ( \ba N-1-m\\q
 \ea
 \right )\,,\ \ \ \ m,q=0,1,\ldots, N-1
 \,.
 \label{pascal}
 \ee
\end{lemma}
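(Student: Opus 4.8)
The plan is to verify directly that the proposed factorization $Q^{(N)}_{(BH)}=D^{(N)}P^{(N)}G^{(N)}$ satisfies the defining equation (\ref{Crealt}), namely $H^{(N)}_{(BH)}(1)\,Q^{(N)}_{(BH)}=Q^{(N)}_{(BH)}\,J^{(N)}(0)$ with $\eta=0$. Since $J^{(N)}(0)$ is the nilpotent shift that sends column $q$ to column $q-1$ (and kills column $0$), the right-hand side is simply the columns of $Q^{(N)}_{(BH)}$ shifted. Thus the whole statement reduces to a single recurrence among the columns: writing $Q^{(N)}_{(BH)}=[\,v_0\,|\,v_1\,|\,\cdots\,|\,v_{N-1}\,]$, equation (\ref{Crealt}) at $\eta=0$ becomes $H^{(N)}_{(BH)}(1)\,v_q=v_{q-1}$ for $q\geq 1$ and $H^{(N)}_{(BH)}(1)\,v_0=0$. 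So $v_0$ is the EP eigenvector and the higher columns form a Jordan chain. I would first set $H:=H^{(N)}_{(BH)}(1)$ explicitly (the matrix (\ref{chamm}) at $z=1$, with the diagonal $\mathrm{i}(N-1-2n)$ and off-diagonal $g_n=\sqrt{(N-n)n}$) and write out what the claimed entry $[Q^{(N)}_{(BH)}]_{n,q}=D^{(N)}_{n,n}\,P^{(N)}_{n,q}\,G^{(N)}_{q,q}$ actually is after multiplying the three factors.

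The second step is to reduce the matrix recurrence to a scalar binomial identity. Because $D^{(N)}$ and $G^{(N)}$ are diagonal, the $n$-dependence enters only through $D^{(N)}_{n,n}=\mathrm{i}^{\,n}\sqrt{\binom{N-1}{N-n}}$ and the Pascal entry $P^{(N)}_{n,q}=\binom{N-1-n}{q}$, while the $q$-dependence carries the extra factor $G^{(N)}_{q,q}=(-\mathrm{i})^{N-q-1}(N-1-q)!$. Plugging into the tridiagonal action $(Hv_q)_n=\mathrm{i}(N-1-2n)\,[Q]_{n,q}+g_n\,[Q]_{n-1,q}+g_{n+1}\,[Q]_{n+1,q}$ and demanding that this equal $[Q]_{n,q-1}$, I expect the square roots from $D$ and the couplings $g_n$ to combine cleanly — indeed $g_n\,D^{(N)}_{n-1,n-1}/D^{(N)}_{n,n}$ should collapse to an integer because $g_n=\sqrt{(N-n)n}$ and $D$ carries $\sqrt{\binom{N-1}{N-n}}$, whose ratios are exactly of the form $\sqrt{(N-n)/n}$ up to the factor of $\mathrm{i}$. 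After this normalization the whole identity should become a relation purely among the binomial coefficients $\binom{N-1-n}{q}$, i.e. a three-term recurrence in $n$ combined with the Pascal rule $\binom{N-1-n}{q}=\binom{N-2-n}{q}+\binom{N-2-n}{q-1}$.

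The third step is to confirm that the remaining binomial identity holds, and separately to check the two boundary conditions: that $v_0$ lies in the kernel of $H$ (the last column, where only $q=0$ survives) and that the first and last rows $n=0,N-1$, which have only one off-diagonal neighbour, are handled correctly by interpreting out-of-range binomials as zero. This is precisely the content of the ``mathematical induction'' the authors allude to: one fixes $q$ and runs the recurrence on $n$ (or vice versa), using Pascal's rule as the inductive engine, with the small cases $N=2,3,6$ already displayed above serving as the base and as a correctness check on signs and on the powers of $\mathrm{i}$.

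I expect the main obstacle to be bookkeeping rather than conceptual: getting the four interleaved sign conventions right simultaneously — the $\mathrm{i}^{\,n}$ from $D$, the $(-\mathrm{i})^{N-q-1}$ from $G$, the alternating imaginary diagonal $\mathrm{i}(N-1-2n)$ of $H$, and the $\pm$ pattern forced by the complex-symmetric (rather than real-antisymmetric) structure of the BH matrix. The delicate point is that the imaginary diagonal term $\mathrm{i}(N-1-2n)\,\binom{N-1-n}{q}$ must reorganize, together with the two neighbour contributions, into the single shifted Pascal entry; verifying that the diagonal contribution is exactly the ``correction'' needed to turn the sum of the two neighbours into $\binom{N-1-n}{q-1}$ is where the identity genuinely has to be proved rather than merely pattern-matched. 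Once the signs are pinned down on the small cases and the $\mathrm{i}$-powers are tracked consistently, the induction itself should close routinely.
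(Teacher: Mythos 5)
Your proposal is correct and follows essentially the same route as the paper: the paper's own proof simply displays the small-$N$ Pascal matrices and asserts that inserting the factorized ansatz into Eq.~(\ref{Crealt}) ``only requires a reduction of the latter equation to entirely elementary combinatorial identities'' verified by induction, which is exactly the Jordan-chain column recurrence $Hv_q=v_{q-1}$ collapsing to Pascal's rule that you spell out in more detail.
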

 \begin{proof}
Direct computations yield
 \be
 P^{(4)}=\left[ \begin {array}{cccc} 1&3&3&1
 \\\noalign{\medskip}1&2&1&0
 \\\noalign{\medskip}1&1&0&0
 \\\noalign{\medskip}1&0&0&0\end {array} \right]\,, \ \ \
 P^{(5)}=\left[ \begin {array}{ccccc} 1&4&6&4&1
 \\\noalign{\medskip}1&3&3&1&0
 \\\noalign{\medskip}1&2&1&0&0
 \\\noalign{\medskip}1&1&0&0&0
 \\\noalign{\medskip}1&0&0&0&0
 \end {array} \right]
 \label{pasca}
 \ee
etc. The general Pascal-triangle form of matrices $P^{(N)}$ is
easily deduced and, subsequently, proved by mathematical induction.
Such a confirmation of the validity of relation (\ref{Crealt}) only
requires a reduction of the latter equation to entirely elementary
combinatorial identities.
\end{proof}

\section{Non-numerical construction (AO case)
\label{jofo}}


The real asymmetric toy-model Hamiltonian $H^{(N)}_{(AO)}(\lambda)$
is maximally elementary but, in comparison with its BH partner, it
is more formal and motivated by mathematics rather than physics. It
did not attract enough attention of experimentalists yet, probably
due to the lack of a recipe of its simulation in the laboratory. At
the EP singularity its most obvious contact with experiments might
be deduced from its complex-symmetric-matrix rearrangement
(\ref{reCRrealt}). It can be given the explicit-elimination form
 \be
 H^{(N)}_{(AO)}= S^{(N)}_{RC}\,H^{(N)}_{(BH)}\,
 \left [S^{(N)}_{RC}\right ]^{-1}\,,
 \ \ \ \ \
 S^{(N)}_{RC} =
 Q^{(N)}_{(AO)}
 \left [Q^{(N)}_{(BH)}\right ]^{-1}\,.
 \label{fifn}
 \ee
In the unitary-evolution-compatible vicinity of this singularity the
$BH - AO$ correspondence can still be re-established using the
suitable forms of perturbation theory \cite{corridors}.

\subsection{Transition matrices $Q^{(N)}_{(AO)}$
at small $N$\label{sec4Rb}}

At the small dimensions $N$ of Hamiltonians (\ref{hamm}), i.e., for
Hamiltonians
 $$
 H^{(2)}_{(AO)}(0)=\left[ \begin {array}{cc} -1&1\\\noalign{\medskip}-1&1\end {array}
 \right]\,,\ \ \ \
  H^{(3)}_{(AO)}(0)=\left[ \begin {array}{ccc} -2&\sqrt {2}&0\\\noalign{\medskip}
 -\sqrt {2}&0&\sqrt {2}\\\noalign{\medskip}0&-\sqrt {2}&2\end {array}
 \right]\,,
 $$
etc, the process of solving Eq.~(\ref{Rrealt}) is entirely routine.
Up to $N=8$ we verified that for our particular model it yields the
sequence of fully non-numerical results
 $$
 Q^{(2)}_{(AO)}=\left[ \begin {array}{cc} -1&1\\\noalign{\medskip}-1&0\end {array}
 \right]
\,,\ \ \
 Q^{(3)}_{(AO)}= \left[ \begin {array}{ccc} 2&-2&1\\\noalign{\medskip}2\,\sqrt
 {2}&-
\sqrt {2}&0\\\noalign{\medskip}2&0&0\end {array} \right]\,,
 $$
 \be
 Q^{(4)}_{(AO)}=\left[ \begin {array}{cccc} -6&6&-3&1
 \\\noalign{\medskip}-6\,\sqrt {3}&4\,\sqrt {3}&-\sqrt {3}&0
 \\\noalign{\medskip}-6\,\sqrt {3}&2\,\sqrt
{3}&0&0\\\noalign{\medskip}-6&0&0&0\end {array} \right]\,, \ \ \
 Q^{(5)}_{(AO)}=\left[ \begin {array}{ccccc} 24&-24&12&-4&1
 \\\noalign{\medskip}48&-36&12&-2&0
 \\\noalign{\medskip}24\,\sqrt {6}&-12\,\sqrt {6}&2\,\sqrt {6}&0
&0\\\noalign{\medskip}48&-12&0&0&0\\\noalign{\medskip}24&0&0&0&0
\end {array} \right]
\label{arrr}
 \ee
etc.

\subsection{Transition matrices $Q^{(N)}_{(AO)}$
at all $N$ \label{sec4C}}


The inspection of sequence (\ref{arrr}) offers an insight in the
general case.

\begin{lemma}
The transition matrices may be factorized into three-term products
 $$
 Q^{(N)}_{(AO)}=
 C^{(N)}
 \times
 P^{(N)}
 \times
 F^{(N)}
 $$
containing the same Pascal-triangle matrix $P^{(N)}$ as above [cf.
Eqs.~(\ref{pasca}) and~(\ref{pascal})]. The pre- and post-factors
are diagonal matrices
 $$
 C^{(N)}_{n,n}=\sqrt{\left ( \ba N-1\\N-n
 \ea
 \right )}\,,\ \ \ n = 0, 1, \ldots, N-1
 $$
and
 $$
 F^{(N)}_{n,n}=(-1)^{N-n-1}(N-1-n)!\,,\ \ \ n = 0, 1, \ldots, N-1\,.
 $$
\end{lemma}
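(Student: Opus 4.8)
The plan is to verify directly that the claimed factorization $Q^{(N)}_{(AO)} = C^{(N)} P^{(N)} F^{(N)}$ satisfies the defining equation (\ref{Rrealt}), namely $H^{(N)}_{(AO)}(0)\,Q^{(N)}_{(AO)} = Q^{(N)}_{(AO)}\,J^{(N)}(0)$. Since the previous Lemma already establishes that $P^{(N)}$ is the Pascal-triangle matrix (\ref{pascal}) with $P^{(N)}_{m,q}=\binom{N-1-m}{q}$, and since $C^{(N)}, F^{(N)}$ are explicit diagonal matrices, the columns of $Q^{(N)}_{(AO)}$ are written out entry-by-entry, and I would substitute them into the generalized Schr\"odinger equation. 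The right-hand side, multiplication by $J^{(N)}(0)$, simply shifts columns: the $q$-th column of $Q^{(N)}_{(AO)} J^{(N)}(0)$ equals the $(q{-}1)$-th column of $Q^{(N)}_{(AO)}$ (with the zeroth column sent to zero). So the whole identity reduces to checking a recurrence relating each column to the tridiagonal action of $H^{(N)}_{(AO)}(0)$ on the next column.

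First I would strip off the diagonal factor $F^{(N)}$, noting that it only rescales columns by $(-1)^{N-1-q}(N-1-q)!$ and therefore converts the clean column-shift of $J^{(N)}(0)$ into a shift-with-known-scalar; this absorbs the factorials and signs and leaves a cleaner combinatorial core. Next I would conjugate $H^{(N)}_{(AO)}(0)$ by the diagonal factor $C^{(N)}$, i.e. examine $[C^{(N)}]^{-1} H^{(N)}_{(AO)}(0)\, C^{(N)}$. Because $C^{(N)}_{n,n}=\sqrt{\binom{N-1}{N-n}}$ and the off-diagonal AO couplings are $g_n(0)=\pm\sqrt{(N-n)n}$ at $\lambda=0$, this similarity transformation should turn the awkward square-root couplings into plain integer entries, since the ratios of consecutive binomial coefficients produce exactly the factors $(N-n)$ and $n$ under the square root. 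The outcome is that the problem collapses to showing that the integer Pascal matrix $P^{(N)}$ intertwines the resulting integer-entry tridiagonal matrix with the Jordan shift, which is a purely combinatorial statement.

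The combinatorial heart, then, is an identity among binomial coefficients: writing out the $(m,q)$ entry of $\bar H P^{(N)}$ and of $P^{(N)} \bar J$ (with $\bar H, \bar J$ the rescaled operators) and equating them should reduce to the Pascal recurrence $\binom{N-1-m}{q}$-type relations together with the elementary identities $\binom{a}{b} = \binom{a-1}{b} + \binom{a-1}{b-1}$ and $b\binom{a}{b}=a\binom{a-1}{b-1}$. Because the diagonal part of $H^{(N)}_{(AO)}(0)$ carries the entries $\pm(N-1-2n)$, these arithmetic prefactors are precisely what the derivative-like action of the two-term Pascal recurrence produces, so the diagonal contribution and the two off-diagonal contributions should cancel columnwise. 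This is the step I expect to be the main obstacle: not because any single identity is deep, but because the signs in $F^{(N)}$ and in the couplings $g_n$, together with the reversed binomial indexing $N-n$ versus $n$, must all be tracked consistently, and a sign error anywhere collapses the cancellation. I would therefore organize it as an explicit mathematical-induction argument on $N$ — exactly as was done for the BH case in the preceding Lemma — reducing relation (\ref{Rrealt}) to elementary combinatorial identities that close under the induction step, and I would cross-check the closed form against the already-computed small cases $Q^{(2)}_{(AO)}$ through $Q^{(5)}_{(AO)}$ in (\ref{arrr}) to pin down the normalization and signs before committing to the induction.
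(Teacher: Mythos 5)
Your proposal is correct and follows essentially the same route the paper indicates: the paper's own ``proof'' is simply the remark that an elementary verification by mathematical induction (reducing relation (\ref{Rrealt}) to combinatorial identities, as in the preceding BH lemma) is left to the reader. Your outline --- shifting columns via $J^{(N)}(0)$, absorbing the diagonal factors $C^{(N)}$ and $F^{(N)}$ to clear the square roots and factorials, reducing to Pascal-type binomial identities, and anchoring the induction on the explicit small-$N$ cases (\ref{arrr}) --- is a faithful and more detailed rendering of exactly that argument.
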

The elementary proof using mathematical induction is left to the
readers.

\section{Direct $AO - BH$ correspondence\label{sec6}}


The original aim of our present paper was an explicit construction
of the $N-$dependent mappings $BH \leftrightarrow AO$  using the
Jordan-block-based intermediate representation of the process at
$\lambda=0$ and $z=1$. Along these lines we obtained the two unitary
Jordan-block-mediated EP-passing evolution models based on {\it ad
hoc\,} redefinitions of the respective BH and AO Hamiltonians,
 \be
 H^{(N)}_{(AO)}(0) \ \to \ L^{(N)}_{(AO)}(0) \ \to \
 J^{(N)}(0) \ \leftarrow \
 L^{(N)}_{(BH)}(1) \ \leftarrow \
 H^{(N)}_{(BH)}(1)\,.
 \label{3step}
 \ee
The redefined matrices $L^{(N)}_{(BH/AO)}(z/\lambda)$ were obtained
using the exact transition matrices $Q^{(N)}_{(BH/AO)}$ (cf. the
respective Eqs.~(\ref{AuCrealt}) and (\ref{AuRrealt}) and/or
sections \ref{jofob} and \ref{jofo} and/or two lines in Table
\ref{tataj}).

In the remaining four evolution scenarios of Table \ref{tataj} we
assumed the knowledge of the other two redefined Hamiltonians
$K^{(N)}_{(BH/AO)}(z/\lambda)$. Let us now complete the picture by
adding the non-numerical version of the underlying necessary
redefinitions of the Hamiltonian.

\subsection{Transition-matrix products $S^{(N)}_{RC}$
at small $N$\label{sec4Rbv}}

Six years ago the construction of the direct $AO - BH$
correspondence was still perceived as a numerical
task~\cite{comment}. We reopened the problem recently, and we are
now able to obtain the transition-matrix-products $S^{(N)}_{RC}$ of
Eq.~(\ref{fifn}) in a fully non-numerical form.

The realization of such a project will be split in two subprojects.
Firstly, at the small matrix dimensions $N$ the construction of the
real $\to$ complex transformation remains non-numerical. The
simplicity and the straightforward invertibility of transition
matrices $Q^{(N)}_{(AO)}$ and $Q^{(N)}_{(BH)}$ enabled us to
evaluate, by brute force, the sequence of solutions up to $N=5$,
 $$
 S^{(2)}_{RC} =\left[ \begin {array}{cc} 1&-1+i
 \\\noalign{\medskip}0&-1\end {array} \right]\,,\ \ \ \ \
 S^{(3)}_{RC} =\left[ \begin {array}{ccc}
 1&-\sqrt {2}+i\sqrt {2}&-2\,i\\\noalign{\medskip}0&-1&
 \sqrt {2}-i\sqrt {2}\\\noalign{\medskip}0&0&1
\end {array} \right]\,,
 $$
 $$
 S^{(4)}_{RC} =\left[ \begin {array}{cccc} 1&-\sqrt {3}+i\sqrt {3}&-2\,i\sqrt {3}&2+2\,i
 \\\noalign{\medskip}0&-1&2-2\,i&2\,i\sqrt {3}\\\noalign{\medskip}0
&0&1&-\sqrt {3}+i\sqrt {3}\\\noalign{\medskip}0&0&0&-1\end {array}
 \right]\,,
 $$
 %
%
%
%
 $$
 S^{(5)}_{RC}= \left[ \begin {array}{ccccc} 1&-2+2\,i&-2\,i\sqrt {6}&4+4\,i&-4
\\\noalign{\medskip}0&-1&\sqrt {6}-i\sqrt {6}&6\,i&-4-4\,i
\\\noalign{\medskip}0&0&1&-\sqrt {6}+i\sqrt {6}&-2\,i\sqrt {6}
\\\noalign{\medskip}0&0&0&-1&2-2\,i\\\noalign{\medskip}0&0&0&0&1
\end {array} \right]\,.
 $$
These results became a starting point of extrapolations.

\subsection{Transition-matrix products $S^{(N)}_{RC}$
at arbitrary $N$\label{sec4Cd}}

The apparently complicated structure of complex matrices
$S^{(N)}_{RC}$ at $N \leq 5$ proved thoroughly simplified by their
decomposition into products
 \be
 S^{(N)}_{RC}=
 B^{(N)}
 \times
 R^{(N)}_{RC}
 \times
 A^{(N)}
 \label{hulan}
 \ee
of a diagonal matrix $B^{(N)}$ times a strictly real matrix of
square roots of integers times another diagonal matrix $A^{(N)}$. In
terms of an auxiliary complex constant $\beta=-1+{\rm i}$ we
obtained
 \be
 B^{(N)}_{n,n}=(-\beta)^{-n}\,
 \,,\ \ \ \ \ \
 A^{(N)}_{n,n}=\beta^n\,,\ \ \ \ \ \
   n = 0, 1, \ldots, N-1\,.
 \label{paskr}
 \ee
At the smallest matrix dimensions $N$ the test calculations yielded
 $$
 R^{(2)}_{RC}= \left[ \begin {array}{cc} 1&1
 \\\noalign{\medskip}0&1\end {array} \right]\,,\ \ \ \ \
%
%
 R^{(3)}_{RC}=\left[ \begin {array}{ccc} 1&\sqrt {2}&1
 \\\noalign{\medskip}0&1&\sqrt {2}
 \\\noalign{\medskip}0&0&1\end {array} \right]\,,
\ \ \ \ \
%
%
 R^{(4)}_{RC}=\left[ \begin {array}{cccc} 1&\sqrt {3}&\sqrt {3}&
 1\\\noalign{\medskip}0&1&2&\sqrt {3}\\\noalign{\medskip}0&0&1&\sqrt {3}
\\\noalign{\medskip}0&0&0&1\end {array} \right]\,,
 $$
 $$
 R^{(5)}_{RC}= \left[ \begin {array}{ccccc} 1&2&\sqrt {6}&2&1
 \\\noalign{\medskip}0&1
&\sqrt {6}&3&2\\\noalign{\medskip}0&0&1&\sqrt {6}&\sqrt {6}
\\\noalign{\medskip}0&0&0&1&2\\\noalign{\medskip}0&0&0&0&1\end {array}
 \right]
 \,,
\ \ \ \ \
 R^{(6)}_{RC}= \left[ \begin {array}{cccccc} 1&\sqrt {5}&\sqrt {5}\sqrt {2}&\sqrt {5
}\sqrt {2}&\sqrt {5}&1\\\noalign{\medskip}0&1&2\,\sqrt {2}&3\,\sqrt
{2 }&4&\sqrt {5}\\\noalign{\medskip}0&0&1&3&3\,\sqrt {2}&\sqrt
{5}\sqrt { 2}\\\noalign{\medskip}0&0&0&1&2\,\sqrt {2}&\sqrt {5}\sqrt
{2}
\\\noalign{\medskip}0&0&0&0&1&\sqrt {5}\\\noalign{\medskip}0&0&0&0&0&1
\end {array} \right]
 $$
etc. The $N-$dependence of these results did not seem to exhibit any
obvious regularities. In our search for these regularities we really
had to evaluate several further elements of the matrix sequence
$R^{(N)}_{RC}$ with $N \geq 6$ in order to be able to formulate some
productive extrapolation hypotheses.

We succeeded. Unfortunately, it would be difficult to display the
larger, $N \geq 6$ matrices $R^{(N)}_{RC}$ in print. Still, we have
to emphasize that their role in the search for the correct
extrapolation pattern was essential. Due to the still sufficiently
elementary combinatorial-number nature of the matrix elements in
question, the successful tentative ansatzs were comparatively easily
shown to be correct. With the rigorous proof provided by
mathematical induction, our final result may be formulated as
follows.

\begin{thm}
At any finite dimension $N$, the one-to-one correspondence
(\ref{fifn}) between the real asymmetric Hamiltonians
$H^{(N)}_{(AO)}(0)$ of Eq.(~\ref{hamm}) and the complex symmetric
Hamiltonians $H^{(N)}_{(BH)}(1)$ of Eq.(~\ref{chamm}) is mediated by
the upper triangular matrices $S^{(N)}_{RC}$ defined as products
(\ref{hulan}) of the complex diagonal pre- and post-factor
(\ref{paskr}) with the upper triangular real matrix of square-roots
of integers, with elements
 \be
 \left [R^{(N)}_{RC}\right ]_{n,n+q}=
 \sqrt{\left (
 \ba
 n+q\\
 n
 \ea
 \right )
 \left (
 \ba
 N-1-n\\
 q
 \ea
 \right )
 }
  \label{uprka}
 \ee
where
 $
 \,
 q=0,1,\ldots,N-1-n\,$ and $
 \,
 n=0,1,\ldots,N-1$.
\end{thm}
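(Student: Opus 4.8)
The plan is to avoid the extrapolate-and-verify route and instead compute $S^{(N)}_{RC}$ \emph{directly} from the two preceding factorization lemmas, which is feasible precisely because $Q^{(N)}_{(BH)}=D^{(N)}P^{(N)}G^{(N)}$ and $Q^{(N)}_{(AO)}=C^{(N)}P^{(N)}F^{(N)}$ share the \emph{same} Pascal-triangle factor $P^{(N)}$. Substituting these into the definition (\ref{fifn}) and cancelling, one obtains
\be
S^{(N)}_{RC}=C^{(N)}\,P^{(N)}\,\bigl(F^{(N)}[G^{(N)}]^{-1}\bigr)\,[P^{(N)}]^{-1}\,[D^{(N)}]^{-1}\,,
\ee
so the whole object reduces to a single conjugation of a diagonal matrix by $P^{(N)}$, flanked by diagonal factors. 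The inner diagonal $M^{(N)}:=F^{(N)}[G^{(N)}]^{-1}$ is read off at once from the lemmas: the factorials cancel and only a clean phase survives, $M^{(N)}_{q,q}=(-{\rm i})^{N-1-q}$. Likewise the two outer prefactors differ only by a phase, $D^{(N)}_{n,n}={\rm i}^{\,n}C^{(N)}_{n,n}$ (with $C^{(N)}_{n,n}=\sqrt{\left(\ba N-1\\n\ea\right)}$), so they introduce no new square roots.

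Next I would invert the Pascal factor. Writing $P^{(N)}_{m,q}=\left(\ba N-1-m\\q\ea\right)$ as $P^{(N)}=J\,\tilde{P}$, where $J$ is the reversal matrix and $\tilde{P}_{i,q}=\left(\ba i\\q\ea\right)$ is the ordinary lower-triangular Pascal matrix, gives $[P^{(N)}]^{-1}=\tilde{P}^{-1}J$ with the standard signed inverse $\tilde{P}^{-1}_{i,q}=(-1)^{i-q}\left(\ba i\\q\ea\right)$. Inserting this together with $M^{(N)}$ into the conjugation yields, for each entry, a single binomial sum; after collecting the phases (and reindexing $q\mapsto r=N-1-q$) it takes the form
\be
\left(P^{(N)}M^{(N)}[P^{(N)}]^{-1}\right)_{m,k}=(-1)^k\sum_{r=m}^{k} \left(\ba N-1-m\\r-m\ea\right)\left(\ba N-1-r\\k-r\ea\right){\rm i}^{\,r}\,.
\ee

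The computational heart is the evaluation of this sum. Shifting $r=m+s$ and applying the subset-of-a-subset (Vandermonde revision) identity
\be
\left(\ba A\\s\ea\right)\left(\ba A-s\\t-s\ea\right)=\left(\ba A\\t\ea\right)\left(\ba t\\s\ea\right)
\ee
with $A=N-1-m$ and $t=k-m$ factors the entire binomial dependence out of the sum, leaving $\sum_s\left(\ba k-m\\s\ea\right){\rm i}^{\,s}=(1+{\rm i})^{k-m}$ by the binomial theorem. Hence the sum collapses to ${\rm i}^{\,m}(1+{\rm i})^{k-m}\left(\ba N-1-m\\k-m\ea\right)$, i.e.\ a pure phase times a \emph{single} binomial.

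Finally I would reassemble $S^{(N)}_{RC}$ by restoring the diagonal prefactors $C^{(N)}$ and $[D^{(N)}]^{-1}$. The lone subtlety --- and the point at which the symmetric double-binomial under the square root in (\ref{uprka}) is \emph{forced} --- is the surviving ratio of norming factors $\sqrt{\left(\ba N-1\\m\ea\right)/\left(\ba N-1\\k\ea\right)}$. Using the revision identity once more in the form $\left(\ba N-1\\m\ea\right)\left(\ba N-1-m\\k-m\ea\right)=\left(\ba N-1\\k\ea\right)\left(\ba k\\m\ea\right)$ rewrites this ratio as $\left(\ba k\\m\ea\right)\big/\left(\ba N-1-m\\k-m\ea\right)$; the leftover \emph{integer} factor $\left(\ba N-1-m\\k-m\ea\right)$ from the sum then pairs with it to give exactly $\sqrt{\left(\ba k\\m\ea\right)\left(\ba N-1-m\\k-m\ea\right)}=[R^{(N)}_{RC}]_{m,k}$, i.e.\ (\ref{uprka}) with $n=m$, $q=k-m$. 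The residual phase is ${\rm i}^{\,m+k}(1+{\rm i})^{k-m}$, which I would match to $(-\beta)^{-m}\beta^{k}$ through the single observation $\beta=-1+{\rm i}={\rm i}(1+{\rm i})$ (whence $1-{\rm i}=-{\rm i}(1+{\rm i})$); this reproduces precisely the diagonal factors $B^{(N)}$ and $A^{(N)}$ of (\ref{paskr}) and completes the factorization (\ref{hulan}). The main obstacle is therefore not any deep estimate but the bookkeeping that keeps the square-root (norming) factors and the Gaussian-integer phases strictly separate: the two revision identities are exactly what force the integer part of the Pascal conjugation to recombine with the binomial norms into the clean symmetric form (\ref{uprka}). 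One may alternatively follow the route signalled in the text and verify the closed form (\ref{uprka}) against $S^{(N)}_{RC}Q^{(N)}_{(BH)}=Q^{(N)}_{(AO)}$ by induction on $N$, which reduces to the very same combinatorial identities.
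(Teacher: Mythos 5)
Your derivation is correct, and it takes a genuinely different route from the paper's. The paper arrives at the closed form (\ref{uprka}) by extrapolating from the explicitly printed matrices $R^{(2)}_{RC},\ldots,R^{(6)}_{RC}$ and then merely states that ``the straightforward proof using mathematical induction is again left to the readers''; you instead \emph{derive} the formula from the two factorization lemmas for $Q^{(N)}_{(BH)}$ and $Q^{(N)}_{(AO)}$. Because both share the same Pascal factor $P^{(N)}$, the product $Q^{(N)}_{(AO)}\left[Q^{(N)}_{(BH)}\right]^{-1}$ collapses to a conjugation of the pure-phase diagonal $M^{(N)}_{q,q}=(-{\rm i})^{N-1-q}$ by $P^{(N)}$, flanked by diagonals, and I have checked the bookkeeping of that conjugation: the reversal-plus-signed-Pascal inverse of $P^{(N)}$, the collapse of the binomial sum under the subset-of-a-subset identity and the binomial theorem to ${\rm i}^{m}(1+{\rm i})^{k-m}$ times a single binomial, the reassembly of the phases into ${\rm i}^{m+k}(1+{\rm i})^{k-m}=(-\beta)^{-m}\beta^{k}$ via $\beta={\rm i}(1+{\rm i})$, and the recombination of the norming ratio with the leftover integer binomial into the symmetric square root of (\ref{uprka}) all work out, and the final expression reproduces the printed $S^{(2)}_{RC}$ and $S^{(3)}_{RC}$. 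What your route buys is an explanation, with no guessing, of where the Gaussian-integer phases and the double binomial come from; what it costs is that the argument is now only as strong as the two factorization lemmas, whose own proofs the paper also largely leaves to the reader, so strictly speaking you have reduced the theorem to those lemmas rather than verified (\ref{fifn}) independently (your closing remark, that a direct induction on $S^{(N)}_{RC}Q^{(N)}_{(BH)}=Q^{(N)}_{(AO)}$ reduces to the same identities, is the honest patch for this). One small point worth a sentence in a write-up: you silently read the diagonal factor of the BH lemma as $D^{(N)}_{n,n}={\rm i}^{n}\sqrt{\left(\ba N-1\\n\ea\right)}={\rm i}^{n}C^{(N)}_{n,n}$; the lower index $N-n$ printed in the paper must indeed be corrected to $n$ (equivalently $N-1-n$) for the displayed $N=2,3$ examples to hold, so your reading is the right one, but the discrepancy should be flagged rather than passed over.
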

The straightforward proof using mathematical induction is again left
to the readers.


\section{Discussion\label{summary}}

\subsection{A note on terminology\label{teos}}

A word of warning should be issued concerning the present
terminology. Firstly, our usage of the term of  ``quantum phase
transition'' was inspired by the related older papers \cite{denis}.
Indeed, this usage might interfere with the same name frequented in
conventional Hermitian-operator descriptions of certain specific
$N-$body quantum systems exhibiting spontaneous symmetry breaking to
different ground states in the $N \to \infty$ limit at zero
temperature. We believe that such a terminological ambiguity cannot
lead to any confusion in the present context.

Secondly, the name of the exceptional point is also occurring in
different physical, not necessarily too closely related
applications. Invented as a useful concept in the purely
mathematical theory of perturbations \cite{Kato}, its possible
implications and appeal broadened. Nowadays they cover many branches
of quantum as well as non-quantum phenomenology ranging from
photonics \cite{photonics} up to the effective, manifestly
non-unitary theories of open quantum systems
\cite{bosch,Berryb,Liouvillians}.

Last but not least we have to add that in our present non-numerical
constructive analysis of various non-Hermitian time-dependent
Hamiltonian operators $H(t)$ admitting the various EP-related
quantum phase transition phenomena emerging at a specific, singular
instant of time $t=t^{(EP)}=0$, many deep theoretical questions
still remain unanswered. Many of them are connected even with the
precise meaning of the innocent-looking concept of quantum
Hamiltonians. Unfortunately, it is beyond the scope of the present
brief note to provide any deeper explanation of the problem.
Interested readers have to be redirected to our recent comprehensive
review \cite{NIP}.

Just {\it pars pro toto} let us mention that the essence of the
problems with the {\em dynamical\,} non-stationarity of the
quasi-Hermitian Hamiltonian operators $H(t)$ lies in the fact that
the related conventional time-dependent Schr\"{o}dinger equation
 \be
 {\rm i}\partial_t \psi(t) = H(t)\,\psi(t)
 \label{seti}
 \ee
only describes the evolution of the underlying unitary quantum
system in question under certain not entirely trivial assumptions.
In this respect and, in particular, in relation to the currently
unresolved problem of the specification of boundaries of the
applicability domain of the adiabatic theorem(s), a warmly
recommended reading may be found in paper \cite{Miranda}.

\subsection{Quantum crossroads\label{cros}}

The limit $t \to t^{({EP})}=0$ represents a quantum catastrophe at
which the Hamiltonian $H^{(-)}(t)$ controlling the $t<0$ unitary
evolution ceases to be diagonalizable. Still, a continuation of the
evolution to positive times may make sense: the quantum system in
question simply forgets about its past and performs a phase
transition. Such a point of view remains phenomenologically
meaningful. The choice of the $t>0$ dynamics controlled by
$H^{(+)}(t)$ remains fully in our hands.

As a consequence, the dynamics prescribed by Eq.~(\ref{schem}) can
be generalized and replaced by a branched recipe. Thus, for example,
the $t>0$ future can be assumed controlled by a richer menu of
optional Hamiltonians $H^{(+)}_{(A)}(t)$, $H^{(+)}_{(B)}(t)$, etc. A
selection out of this menu could be controlled, for example, by the
respective probabilities $p_{(A)}$, $p_{(A)}$, etc. We may conclude
that in applications, unitary evolution may branch and, at the EP
indeterminacy instant $t=0$, bifurcate as follows,
 \ben
 \ba
 \begin{array}{|c|}
  \hline
 \vspace{-0.35cm}\\
         {\rm initial\ phase,}\ t<0, \\
    {\rm  Hamiltonian\ } H^{(-)}(t)\\
 \hline
 \ea
\\
  \Downarrow \,\\
    \stackrel
 { \bf process \ of\ degeneracy}{\small \rm }
 \\
  \Downarrow \,\\
    \begin{array}{|c|}
 \hline
 \vspace{-0.35cm}\\
 t>0\ {\rm \ branch\ A,}\\
    {\rm \ Hamiltonian\ } H^{(+)}_{(A)}(t)\\
  \hline
 \ea
 \stackrel{ {\bf   option\ A}  }{ \Longleftarrow }
 \begin{array}{|c|}
 \hline
 \vspace{-0.35cm}\\
      {\rm the\ EP\ ``crossroad",}\\
  {\rm indeterminacy\ at}\ t=0 \\
   %
 \hline
 \ea
 \stackrel{ {\bf  option\ B}  }{ \Longrightarrow }
 \begin{array}{|c|}
 \hline
 \vspace{-0.35cm}\\
 t>0\ {\rm \ branch\ B,}\\
    {\rm \ Hamiltonian\ } H^{(+)}_{(B)}(t)\\
 \hline
 \ea \
  \\
 \ea
 \een
Thus, in principle, the future extensions of our present models
might even incorporate a multiverse-resembling branching of
evolutions at $t=0$.

\section{Summary\label{sumsummary}}

The present specific, non-branching toy-model explicit closed-form
realizations of the eligible quantum  BH $\leftrightarrow$ AO phase
transitions of the third kind are listed in Table~\ref{tataj}. As
many as six different unitary-evolution scenarios were identified
there, with the reference to their respective generator matrices in
subsequent technical sections.

The idea of the feasibility of such constructions was inspired by
the availability of multiple nontrivial $N$ by $N$ matrix
Hamiltonians $H^{(\mp)}_{[III]}(t)$ admitting the EP degeneracy in
the literature. For our purposes we choose the realizations of
$H^{(\mp)}_{[III]}(t)$ via the complex symmetric BH model of
Ref.~\cite{Uwe}, and via the real, asymmetric but ${\cal
PT}-$symmetrized $N$ by $N$ matrix version of the discrete
anharmonic oscillator of Ref.~\cite{maximal}. {\it A priori}, these
models looked particularly useful even though their explicit $t=0$
matching was considered, for several years, a difficult open problem
\cite{comment}.

A decisive encouragement and technical hints of its present solution
only came with the recent clarification of the concept of smallness
of perturbations in the non-Hermitian but hiddenly unitary quantum
systems living in a vicinity of their natural EP-related boundaries
of stability \cite{corridors}. On this background the key
technicality has finally been revealed to lie in an extreme
enhancement of the anisotropy of the geometry of the physical
Hilbert space of states \cite{lotor} near the EP extreme. Thus, our
efforts were redirected to the closed-form constructions of the
metric and, subsequently, to the discovery of the non-numerical form
of the triplet of the fundamental basis-transformation matrices
$Q^{(N)}_{(BH)}$, $Q^{(N)}_{(AO)}$ and $R^{(N)}_{(RC)}$ forming the
most relevant ingredients of our resulting ultimate insight in the
physical mechanism and mathematical aspects of the EP-related
unitary quantum phase transitions of the third kind.

\section*{Acknowledgements}

This work was not supported by grants.


\end{document}